\newtheorem{theorem}{Theorem}
\newtheorem{lemma}{Lemma}
\newtheorem{definition}{Definition}
\begin{document}
\title{A Complete Proof of an Important Theorem for Variable-to-Variable Length Codes}

\author{Wei Yan,~\IEEEmembership{Member,~IEEE}
        and Yunghsiang S. Han,~\IEEEmembership{Fellow,~IEEE}
\thanks{Y. S. Han is with Shenzhen Institute for Advanced Study, University of Electronic Science and Technology of China, Shenzhen, China, email: yunghsiangh@gmail.com.}
}
\maketitle
\begin{abstract}
Variable-to-variable length~(VV) codes are a class of lossless source coding.
As their name implies, VV codes encode a variable-length sequence of source symbols into a variable-length codeword.
This paper will give a complete proof of an important theorem for variable-to-variable length codes.
\end{abstract}
\IEEEpeerreviewmaketitle

\section{Preliminaries}\label{pre}

VV codes can be considered a concatenation of variable-to-fixed length~(VF) codes and fixed-to-variable length (FV) codes. 

\subsection{Variable-to-fixed length codes}
A VF encoder comprises a \emph{parser} and a \emph{string encoder}.
First, the parser partitions the source sequence $x$ into a concatenation of phrases $x^{1}$,$x^{2}$,$\cdots$ from a dictionary $\mathcal{D}$, that is, $x^{i}\in\mathcal{D}$.
Next, the string encoder maps the phrase $x^{i}\in\mathcal{D}$ into the fixed-length string.
To ensure the completeness and uniqueness of the segmentation of the source sequence, $\mathcal{D}$ must be proper and complete.
\begin{definition}[\cite{FK72}]
\begin{enumerate}
\item If every variable-length string $\alpha_{i}\in\mathcal{D}$ is not a prefix of another variable-length string $\alpha_{j}\in\mathcal{D}$, then $\mathcal{D}$ is termed proper.
\item If every infinite sequence has a prefix in $\mathcal{D}$, then $\mathcal{D}$ is termed complete.
\end{enumerate}
\end{definition}
{For example, the dictionary $\mathcal{D}=\{0,10,11\}$ over $\{0,1\}$ is clearly proper.
If the first element of the infinite sequence is $0$, then $0\in\mathcal{D}$ is its prefix;
if the first element of the infinite sequence is $1$, then $10\in\mathcal{D}$ or $11\in\mathcal{D}$ is its prefix.
Therefore, the dictionary $\mathcal{D}$ is complete.}

\subsection{Variable-to-variable length codes}
VV codes can be considered a concatenation of VF and FV codes. First, the VF encoder maps the variable-length string $\alpha\in\mathcal{D}$ into the fixed-length string, and then the FV encoder maps the fixed-length string into the variable-length string.
Nishiara \emph{et al.}~\cite{VV00} define the almost surely complete~(ASC) dictionary.
\begin{definition}[\cite{VV00}]
For every infinite sequence, if the probability there exists a string in dictionary $\mathcal{D}$ is a prefix of the infinite sequence is $1$, then $\mathcal{D}$ is termed almost surely complete.
\end{definition}
An example of a dictionary $\mathcal{D}$ over $\{0,1\}$ that proper and ASC is $\mathcal{D}=\{0,10,110,1110,\cdots\}$. However, it is not complete because the all-one infinite sequence has no prefix in $\mathcal{D}$.

\section{The important theorem for VV codes}\label{sec_lb}
\begin{theorem}~\cite{VV00} \label{lemma1}
Let $S=(P,\mathcal{A})$ denote a discrete memoryless source with entropy $H(P)<\infty$ and a countable alphabet $\mathcal{A}$.
Given a VV code $\mathcal{C}$ with a proper and ASC dictionary $\mathcal{D}$, then
\begin{equation*}
H(\mathcal{D})= H(P)\overline{l(\mathcal{D})},
\end{equation*}
where $H(\mathcal{D})=-\sum_{\alpha\in\mathcal{D}}P(\alpha)\log_{2}P(\alpha)$ denotes the entropy of $\mathcal{D}$
and $\overline{l(\mathcal{D})}=\sum_{\alpha\in\mathcal{D}}P(\alpha)|\alpha|$ denotes the average length of $\mathcal{D}$.
\end{theorem}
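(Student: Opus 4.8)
The plan is to realise the dictionary as a stopping rule applied to the memoryless source and then invoke Wald's identity, which collapses the convergence issues into a single clean step. Let $X_1,X_2,\dots$ be i.i.d.\ with law $P$, and set $N=\inf\{k\ge 1:\,X_1X_2\cdots X_k\in\mathcal{D}\}$. First I would extract the structural consequences of the hypotheses. Since $\mathcal{D}$ is \emph{proper}, the events $\{X_1\cdots X_{|\alpha|}=\alpha\}$, $\alpha\in\mathcal{D}$, are pairwise disjoint, so on $\{N<\infty\}$ the parsed phrase $Y:=X_1X_2\cdots X_N$ is a well-defined $\mathcal{D}$-valued random variable with $\Pr[Y=\alpha]=P(\alpha)$; since $\mathcal{D}$ is \emph{ASC}, $\Pr[N<\infty]=1$, so $Y$ is defined almost surely and $\sum_{\alpha\in\mathcal{D}}P(\alpha)=1$. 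Moreover $\{N=k\}\in\sigma(X_1,\dots,X_k)$, so $N$ is a stopping time for the filtration $\mathcal{F}_k=\sigma(X_1,\dots,X_k)$, and rearranging a nonnegative double sum gives $\mathbb{E}[N]=\sum_{k\ge 1}k\,\Pr[N=k]=\sum_{\alpha\in\mathcal{D}}|\alpha|\,P(\alpha)=\overline{l(\mathcal{D})}$.

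The heart of the proof is the chain of identities
\begin{equation*}
H(\mathcal{D})=H(Y)=\mathbb{E}\bigl[-\log_2 P(Y)\bigr]=\mathbb{E}\Bigl[\textstyle\sum_{j=1}^{N}\bigl(-\log_2 P(X_j)\bigr)\Bigr],
\end{equation*}
where the first equality is the definition of $H(\mathcal{D})$ combined with $\Pr[Y=\alpha]=P(\alpha)$, the second is the expression of the entropy of a countable-valued variable as an expectation, and the third uses that $P$ is a product measure, so $P(Y)=\prod_{j=1}^{N}P(X_j)$ along the realised phrase. Writing $\xi_j:=-\log_2 P(X_j)$, the $\xi_j$ are i.i.d.\ and nonnegative with $\mathbb{E}[\xi_1]=H(P)$, and $\xi_j$ is $\mathcal{F}_j$-measurable and independent of $\mathcal{F}_{j-1}$. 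The form of Wald's identity valid for nonnegative increments (which needs no integrability hypothesis on $N$) then gives
\begin{equation*}
\mathbb{E}\Bigl[\textstyle\sum_{j=1}^{N}\xi_j\Bigr]=\sum_{j\ge 1}\mathbb{E}\bigl[\xi_j\,\mathbf{1}_{\{N\ge j\}}\bigr]=\sum_{j\ge 1}\mathbb{E}[\xi_j]\,\Pr[N\ge j]=H(P)\,\mathbb{E}[N]=H(P)\,\overline{l(\mathcal{D})},
\end{equation*}
using Tonelli's theorem for the first equality and the fact that $\{N\ge j\}=\{N>j-1\}\in\mathcal{F}_{j-1}$ is independent of $\xi_j$ for the second. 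Combining the two displays yields $H(\mathcal{D})=H(P)\,\overline{l(\mathcal{D})}$.

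I would present the computation so that all quantities live in $[0,+\infty]$, so that the case $\overline{l(\mathcal{D})}=\infty$ (equivalently $\mathbb{E}[N]=\infty$) needs no separate treatment, noting in passing that when $H(P)=0$ the source is deterministic, $\overline{l(\mathcal{D})}$ is finite, and both sides vanish, so no $0\cdot\infty$ ambiguity arises. The step I expect to be the genuine obstacle, and where I suspect the gap in earlier treatments lies, is not any single computation but the honest handling of the limit. A direct approach instead truncates the parsing tree at depth $n$, evaluates the entropy of the truncation by the finite chain rule as $H(P)\sum_{i=1}^{n}\Pr[N\ge i]$ (which already yields the inequality $H(\mathcal{D})\ge H(P)\,\overline{l(\mathcal{D})}$ on letting $n\to\infty$, since the depth-$n$ truncation is a function of $Y$), and must then prove that the residual entropy carried by the still-unfinished length-$n$ nodes tends to $0$; over a countably infinite alphabet and with a possibly infinite dictionary this requires a real argument — one shows it equals $H(P)\sum_{j>n}\Pr[N\ge j]$, the tail of a series with sum $\mathbb{E}[N]$. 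Routing everything through Wald's identity for nonnegative increments is precisely what bypasses this delicate monotone-limit discussion, so verifying the hypotheses of that identity in the present setting is the part I would write out most carefully.
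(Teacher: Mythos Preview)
Your proof is correct and takes a genuinely different route from the paper. The paper proceeds combinatorially: it defines truncated dictionaries $\mathcal{D}_n$ (your ``depth-$n$ truncation''), shows by induction on $n$ via an extension argument that $H(\mathcal{D}_n)=H(P)\,\overline{l(\mathcal{D}_n)}$ for each $n$ --- handling along the way the case where infinitely many nodes at a given level must be extended --- and then establishes by a sandwich argument that $H(\mathcal{D}_n)\to H(\mathcal{D})$ and $\overline{l(\mathcal{D}_n)}\to\overline{l(\mathcal{D})}$; this is essentially the ``direct approach'' you sketch in your final paragraph, carried out with a squeeze rather than an explicit residual-entropy bound. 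Your approach via Wald's identity for nonnegative increments is cleaner and shorter: by realising the parsed phrase through a stopping time on the i.i.d.\ sequence and invoking Tonelli, you absorb all the monotone-limit bookkeeping into one standard result and automatically cover the case $\overline{l(\mathcal{D})}=\infty$ in the extended reals. The paper's argument, by contrast, is entirely elementary and self-contained, requiring no stopping-time machinery; that buys accessibility for a source-coding audience and makes the truncation structure explicit, at the cost of the lengthy limit verifications that your formulation sidesteps.
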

Theorem~\ref{lemma1} was first introduced by Nishiara \emph{et al.}~\cite{VV00}, but they did not give complete proof.
The proof for the proper and complete dictionary and the finite alphabet can be found in~\cite{FK72}.
When studying the entropy of randomly stopped sequences,
Ekroot \emph{et al.}~\cite{LT91} gave the proof of the proper and ASC dictionary and the finite alphabet version of Theorem~\ref{lemma1}.
In addition, a similar theorem, called \emph{conservation of entropy}~\cite{S99}, is for memory sources. Therefore, we here give the first complete proof of Theorem~\ref{lemma1} in Section~\ref{A1}.
Unlike previous work, the contribution of this proof is the first to give a proof of the infinite countable alphabet version of Theorem~\ref{lemma1}.

\section{Proof of Theorem~\ref{lemma1}} \label{A1}
The proof of Theorem~\ref{lemma1} in this Section draws on the proof techniques in~\cite{FK72,LT91}.
Some parts of the proof are borrowed from~\cite{FK72,LT91}. For a better understanding of the reader, we will give complete proof.
Let us first give some definitions.
\begin{definition}
Suppose $\mathcal{D}$ is a  dictionary.
\begin{enumerate}
\item
Let $\mathcal{A}^{n}$ denote all strings of length $n$ over the alphabet $\mathcal{A}$.
For any positive integer $n$, the three dictionaries are defined as follows:
\begin{equation*}
	\begin{aligned}
			T_n&\triangleq  \{\alpha\in\mathcal{A}^{n} \,\Big| \, \hbox{$\beta$ is not a prefix of $\alpha$, for $\forall$ $\beta\in\mathcal{D}$  }  \} ,  \\
		\mathcal{D}_{n}^{\perp}&\triangleq \{ \alpha\in\mathcal{D} \,\Big| \, |\alpha|=n \}  \cup T_n,  \\
		\mathcal{D}_{n}&\triangleq \{ \alpha\in\mathcal{D} \,\Big| \, |\alpha|<n \}\cup \mathcal{D}_{n}^{\perp}.
	\end{aligned}
\end{equation*}
In particular, $T_1=\{\alpha\in\mathcal{A} \,|\,\alpha\notin \mathcal{D} \} $ and $\mathcal{D}_{1}^{\perp}=\mathcal{D}_{1}=\mathcal{A}$.
\item
For every string $\beta$, let
\[
(\mathcal{D},\beta)   \triangleq\{\alpha\in\mathcal{D} \,\Big| \, \hbox{$\beta$ is a prefix of $\alpha$}  \} .
\]
\item For every $\alpha\in\mathcal{D}$, let $\mathcal{D}[\alpha]$ denote a dictionary as follows:
\[
   \mathcal{D}[\alpha]\triangleq (\mathcal{D}\setminus\{\alpha\})\cup\alpha\mathcal{A},
\]
where $\alpha\mathcal{A}\triangleq\{\alpha\beta\mid\beta\in\mathcal{A}\}$.
The dictionary $\mathcal{D}[\alpha]$ is said to be an extension of dictionary $\mathcal{D}$, and $\alpha$ is termed the extending string from $\mathcal{D}$ to $\mathcal{D}[\alpha]$.
\end{enumerate}
\end{definition}
When $\mathcal{D}$ is proper and complete, then $\mathcal{D}[\alpha]$ is also proper and complete.
Before proving Theorem~\ref{lemma1}, an auxiliary lemma is introduced.
\begin{lemma}  \label{lemma9}
\begin{enumerate}	
\item[(1)] If the dictionary $\mathcal{D}$ is proper, then $\mathcal{D}_{n}$ is proper and complete.
\item[(2)] $T_m$ is the set of extending strings from $\mathcal{D}_{m}$ to $\mathcal{D}_{m+1}$.
\item[(3)] If the dictionary $\mathcal{D}$ is proper and ASC, then
\[
\sum_{\alpha\in(\mathcal{D},\beta)}P(\alpha)=P(\beta),
\]
for any given string $\beta$, where $\beta$ has no prefix less than length $|\beta|$ in $\mathcal{D}$.
\item[(4)] If the dictionary $\mathcal{D}$ is proper, then
\[
\sum_{\beta\in\mathcal{D}_{m}^{\perp}}(\mathcal{D},\beta)=\{\alpha\in\mathcal{D} \,\Big|\,|\alpha|\geq m \} ,
\]
for all $m\in\mathbb{N}^{+}$.
\end{enumerate}
\end{lemma}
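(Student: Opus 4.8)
The plan is to prove the four items largely separately, since (1), (2) and (4) are purely combinatorial facts about prefixes whereas (3) is the only place where the probability measure on infinite sequences enters. For (1), I would argue \emph{properness} of $\mathcal{D}_{n}$ by a short case distinction on a pair of distinct strings $\gamma_{1},\gamma_{2}\in\mathcal{D}_{n}$: when both lie in $\{\alpha\in\mathcal{D}:|\alpha|<n\}$, incomparability is exactly properness of $\mathcal{D}$; when exactly one, say $\gamma_{1}$, lies there, either $\gamma_{2}\in\mathcal{D}$ and we again use properness of $\mathcal{D}$, or $\gamma_{2}\in T_{n}$ and then no $\mathcal{D}$-string (in particular not $\gamma_{1}$) is a prefix of $\gamma_{2}$, while $|\gamma_{1}|<n=|\gamma_{2}|$ forbids the other direction; when both lie in $\mathcal{D}_{n}^{\perp}$ they have equal length $n$ and are incomparable for that reason. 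For \emph{completeness}, given an infinite sequence $s$ with length-$n$ prefix $s^{(n)}$, either some $\beta\in\mathcal{D}$ is a prefix of $s^{(n)}$ — then $|\beta|\le n$, so $\beta\in\mathcal{D}_{n}$ and $\beta$ is a prefix of $s$ — or no such $\beta$ exists, in which case $s^{(n)}\in T_{n}\subseteq\mathcal{D}_{n}$ is itself a prefix of $s$.

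For (2), I would first record $\mathcal{D}_{m}=\{\alpha\in\mathcal{D}:|\alpha|\le m\}\cup T_{m}$ (and the analogue for $m+1$), and note that the strings of $T_{m}$ all have length $m$, are pairwise incomparable, and lie in $\mathcal{D}_{m}$; hence they can be extended one at a time in any order without any of them — or any of the remaining strings of $\mathcal{D}_{m}$, all of length $\le m<m+1$ — being disturbed. It then remains to verify the set identity $\bigcup_{\beta\in T_{m}}\beta\mathcal{A}=\{\alpha\in\mathcal{D}:|\alpha|=m+1\}\cup T_{m+1}$: for ``$\subseteq$'', the only candidate $\mathcal{D}$-prefix of $\beta a$ with $\beta\in T_{m}$ is $\beta a$ itself, so $\beta a$ is either in $\mathcal{D}$ or in $T_{m+1}$; for ``$\supseteq$'', take the length-$m$ prefix $\beta$ of an element of $\mathcal{D}$ of length $m+1$ (respectively of $T_{m+1}$) and observe that properness of $\mathcal{D}$ (respectively the definition of $T_{m+1}$) forces $\beta\in T_{m}$. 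Combining, $\mathcal{D}_{m+1}$ is obtained from $\mathcal{D}_{m}$ by extending exactly the strings in $T_{m}$.

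For (4), I would prove the two inclusions of $\bigcup_{\beta\in\mathcal{D}_{m}^{\perp}}(\mathcal{D},\beta)=\{\alpha\in\mathcal{D}:|\alpha|\ge m\}$, noting that the union is in fact disjoint — the length-$m$ prefix of a given $\alpha$ is unique — which is why it may be written as a sum. The inclusion ``$\subseteq$'' is immediate because every $\beta\in\mathcal{D}_{m}^{\perp}$ has length $m$. For ``$\supseteq$'', given $\alpha\in\mathcal{D}$ with $|\alpha|\ge m$ let $\beta$ be its length-$m$ prefix: if $\beta\in\mathcal{D}$ then $\beta\in\mathcal{D}_{m}^{\perp}$ outright and $\alpha\in(\mathcal{D},\beta)$; otherwise any $\mathcal{D}$-prefix of $\beta$ would be a $\mathcal{D}$-prefix of $\alpha$, hence equal to $\alpha$ by properness and hence of length $m$, contradicting $\beta\notin\mathcal{D}$, so $\beta\in T_{m}\subseteq\mathcal{D}_{m}^{\perp}$ and again $\alpha\in(\mathcal{D},\beta)$.

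Part (3) is the crux, and I expect it to be the main obstacle, being the one analytic rather than combinatorial statement. The plan is to work with the product probability measure $\mu$ on the space of infinite sequences over $\mathcal{A}$, for which the cylinder $[\gamma]=\{s:\gamma\text{ is a prefix of }s\}$ has $\mu([\gamma])=P(\gamma)$ and incomparable strings yield disjoint cylinders. The case $\beta\in\mathcal{D}$ is trivial, so suppose $\beta\notin\mathcal{D}$; since $\beta$ has no $\mathcal{D}$-prefix of length $<|\beta|$, any $\gamma\in\mathcal{D}$ with $[\gamma]\cap[\beta]\neq\emptyset$ must have $\beta$ as a prefix, i.e.\ $\gamma\in(\mathcal{D},\beta)$, so that $[\beta]\cap\bigcup_{\gamma\in\mathcal{D}}[\gamma]=\bigsqcup_{\gamma\in(\mathcal{D},\beta)}[\gamma]$. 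Because $\mathcal{D}$ is ASC we have $\mu\!\left(\bigcup_{\gamma\in\mathcal{D}}[\gamma]\right)=1$, hence the left-hand side has measure $\mu([\beta])=P(\beta)$, and by countable additivity the right-hand side has measure $\sum_{\gamma\in(\mathcal{D},\beta)}P(\gamma)$; equating the two gives the claim. The details needing care are that the product measure is genuinely well defined over a \emph{countable} alphabet (it is, by the standard product construction), that $(\mathcal{D},\beta)$ is countable so countable additivity applies, and that the ASC hypothesis is used only through the measure-one statement, so that no null set ever has to be tracked pointwise.
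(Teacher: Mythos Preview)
Your proposal is correct and follows essentially the same approach as the paper: the same case analyses for (1), the same identification of $T_m$ as the strings to be extended in (2), the same cylinder-set/prefix-partition argument for (3), and the same two-inclusion argument for (4). Your treatment is in fact somewhat cleaner in places --- you make the measure-theoretic content of (3) explicit (the paper leaves the use of countable additivity and of the ASC hypothesis implicit), and in (4) you work directly with the length-$m$ prefix rather than invoking completeness of $\mathcal{D}_m$ from part (1) as the paper does --- but these are refinements of the same argument, not a different route.
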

\begin{proof}
\begin{enumerate}	
\item[(1)] First, we prove that $\mathcal{D}_{n}$ is proper.
	The dictionary $\mathcal{D}_{n}$ can be written as follows:
\[
  	\mathcal{D}_{n}\triangleq \{ \alpha\in\mathcal{D} \,\Big| \, |\alpha|\leq n \}\cup T_{n}.
\]
The following two cases are considered for any string $\alpha_i \in \mathcal{D}_{n}$.
\begin{enumerate}
\item Assume $\alpha_i \in \{ \alpha\in\mathcal{D} \,| \, |\alpha|\leq n \}$. Since $\mathcal{D}$ is proper,
$\alpha_i$ is not a prefix of another string $\alpha_j \in \{ \alpha\in\mathcal{D} \,| \, |\alpha|\leq n \}$.
Due to the definition of $T_{n}$, $\alpha_i$ is not a prefix of $\alpha_j$ for all $\alpha_j \in T_{n}$.
\item Assume $\alpha_i \in T_{n}$. Due to $|\alpha_i|=n$, $\alpha_i$ is not a prefix of $\alpha_j$ for all $\alpha_j \in\mathcal{D}_{n}\setminus\{\alpha_i\}$.
\end{enumerate}	
Therefore, $\mathcal{D}_{n}$ is proper.

Second, we prove that $\mathcal{D}_{n}$ is complete.
For any infinite sequence, we consider its first $n$-bit string $\gamma$.
Assume there exists $\alpha \in \mathcal{D}$ such that $\alpha$ is a prefix of $\gamma$.
Then, $\alpha \in \{ \alpha\in\mathcal{D} \,| \, |\alpha|\leq n \} \subseteq \mathcal{D}_{n}$ is the prefix of the infinite sequence.
Assume $\beta$ is not a prefix of $\gamma$ for all $\beta\in\mathcal{D}$.
Then, $\gamma\in T_n\subseteq \mathcal{D}_{n}$ is the prefix of the infinite sequence.
This part of the proof is complete.
\item [(2)]
According to the definition of $\mathcal{D}_{n}$,
the set of extending strings from $\mathcal{D}_{m}$ to $\mathcal{D}_{m+1}$
is essentially composed of elements of length $m$ in $\mathcal{D}_{m}$ that do not belong to $\mathcal{D}$.
Because $T_m$ happens to consist of elements of length $m$ in $\mathcal{D}_{m}$ that do not belong to $\mathcal{D}$.
Therefore, $T_m$ is the set of extending strings from $\mathcal{D}_{m}$ to $\mathcal{D}_{m+1}$.
\item[(3)] Consider all infinite sequences with probabilities greater than $0$ that begin with $\beta$, denote it as $\mathcal{H}$.
First, the sum of the probabilities of these sequences is $P(\beta)$.
Second, since $\mathcal{D}$ is proper and ASC, the infinite sequences considered have unique prefixes in $\mathcal{D}$.
Because $\beta$ has no prefix less than length $|\beta|$ in $\mathcal{D}$,
the length of prefix $\alpha\in\mathcal{D}$ of an infinite sequence starting with $\beta$ is greater than or equal to $|\beta|$. Thus, $\beta$ is a prefix of $\alpha\in\mathcal{D}$.
$\mathcal{H}$ has a subset of infinite sequences starting with $\alpha$,
the sequences in this set have the unique prefix $\alpha\in\mathcal{D}$,
and the sum of the probabilities of these sequences is $P(\alpha)$.
Then, we obtain
\[
  P(\beta)=\sum_{\gamma\in\mathcal{H}}P(\gamma)=\sum_{\alpha\in(\mathcal{D},\beta)}P(\alpha).
\]
\item[(4)] First, due to $\beta\in\mathcal{D}_{m}^{\perp}$, we have $|\beta|=m$.
Thus, $|\alpha|\geq m$ for every $\alpha\in(\mathcal{D},\beta)$.
Therefore, we obtain
\[
\sum_{\beta\in\mathcal{D}_{m}^{\perp}}(\mathcal{D},\beta)\subseteq \{\alpha\in\mathcal{D} \,\Big|\,|\alpha|\geq m \} .
\]
Second, due to Lemma~\ref{lemma9}$(1)$, $\mathcal{D}_{m}$ is proper and complete.
For every $\gamma\in\{\alpha\in\mathcal{D} \,|\,|\alpha|\geq m \}$,
$\gamma$ has a unique prefix $\beta\in\mathcal{D}_{m}$.
Since $\gamma$ belongs to $\mathcal{D}$, $\beta\notin \{ \alpha\in\mathcal{D} \,| \, |\alpha|<m \} $;
that is, $\beta\in\mathcal{D}_{m}^{\perp}$.
Thus, we have
$$\gamma\in\sum_{\beta\in\mathcal{D}_{m}^{\perp}}(\mathcal{D},\beta). $$
Therefore, we obtain
\[
\sum_{\beta\in\mathcal{D}_{m}^{\perp}}(\mathcal{D},\beta)\supseteq \{\alpha\in\mathcal{D} \,\Big|\,|\alpha|\geq m \} .
\]
\end{enumerate}	
\end{proof}

Now, we begin the proof of the theorem.
\begin{theorem}[Theorem~\ref{lemma1} Restated]
	Let $S=(P,\mathcal{A})$ denote a discrete memoryless source with entropy $H(P)<\infty$ and a countable alphabet $\mathcal{A}$.
	Given a VV code $\mathcal{C}$ with a proper and ASC dictionary $\mathcal{D}$, then
	\begin{equation}\label{e11}
		H(\mathcal{D})= H(P)\overline{l(\mathcal{D})},
	\end{equation}
	where $H(\mathcal{D})=-\sum_{\alpha\in\mathcal{D}}P(\alpha)\log_{2}P(\alpha)$ and $\overline{l(\mathcal{D})}=\sum_{\alpha\in\mathcal{D}}P(\alpha)|\alpha|$.
\end{theorem}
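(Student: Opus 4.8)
The plan is to compare the entropy and average length of the source-symbol alphabet with those of the dictionary $\mathcal{D}$ by interpolating through the finite, proper, complete dictionaries $\mathcal{D}_n$ from Lemma~\ref{lemma9}. The central identity I would prove is a ``one-step extension'' formula: if $\mathcal{D}'$ is proper and complete and $\mathcal{D}'[\alpha]$ is the extension obtained by replacing $\alpha$ with $\alpha\mathcal{A}$, then because the source is memoryless, $P(\alpha\beta)=P(\alpha)P(\beta)$ for $\beta\in\mathcal{A}$, so
\begin{align*}
H(\mathcal{D}'[\alpha])-H(\mathcal{D}')&=P(\alpha)H(P),\\
\overline{l(\mathcal{D}'[\alpha])}-\overline{l(\mathcal{D}')}&=P(\alpha).
\end{align*}
Hence the quantity $\Delta(\mathcal{D}')\triangleq H(\mathcal{D}')-H(P)\,\overline{l(\mathcal{D}')}$ is invariant under a single extension step. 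Since $\mathcal{D}_1=\mathcal{A}$ gives $\Delta(\mathcal{D}_1)=H(P)-H(P)\cdot 1=0$, and by Lemma~\ref{lemma9}(2) each $\mathcal{D}_{m+1}$ is obtained from $\mathcal{D}_m$ by extending on the strings in $T_m$, we get $\Delta(\mathcal{D}_n)=0$ for all $n$, i.e.\ $H(\mathcal{D}_n)=H(P)\,\overline{l(\mathcal{D}_n)}$ for every $n$.

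Next I would pass to the limit $n\to\infty$. Write $\mathcal{D}_n=\{\alpha\in\mathcal{D}:|\alpha|<n\}\cup\mathcal{D}_n^{\perp}$, and split the sums defining $H(\mathcal{D}_n)$ and $\overline{l(\mathcal{D}_n)}$ into the contribution of the ``already finished'' words $\{\alpha\in\mathcal{D}:|\alpha|<n\}$ and the contribution of $\mathcal{D}_n^{\perp}$. The first part converges monotonically to $-\sum_{\alpha\in\mathcal{D}}P(\alpha)\log_2 P(\alpha)=H(\mathcal{D})$ and to $\sum_{\alpha\in\mathcal{D}}P(\alpha)|\alpha|=\overline{l(\mathcal{D})}$ respectively (these are sums of nonnegative terms, so monotone convergence applies; note $\overline{l(\mathcal{D})}$ may a priori be $+\infty$, which is allowed by the statement). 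The crux is then to show that the ``tail'' contributions from $\mathcal{D}_n^{\perp}$ vanish:
\[
\sum_{\beta\in\mathcal{D}_n^{\perp}}P(\beta)|\beta|=n\!\!\sum_{\beta\in\mathcal{D}_n^{\perp}}\!\!P(\beta)\longrightarrow 0,
\qquad
-\sum_{\beta\in\mathcal{D}_n^{\perp}}P(\beta)\log_2 P(\beta)\longrightarrow 0 .
\]
For the length tail, I would use Lemma~\ref{lemma9}(3)--(4): summing $\sum_{\alpha\in(\mathcal{D},\beta)}P(\alpha)=P(\beta)$ over $\beta\in\mathcal{D}_n^{\perp}$ and invoking (4) gives $\sum_{\beta\in\mathcal{D}_n^{\perp}}P(\beta)=\sum_{\alpha\in\mathcal{D},\,|\alpha|\ge n}P(\alpha)$, and one further step (again via (3)--(4)) yields $n\sum_{\beta\in\mathcal{D}_n^{\perp}}P(\beta)\le\sum_{\alpha\in\mathcal{D},\,|\alpha|\ge n}P(\alpha)|\alpha|$, which is the tail of the (possibly infinite, but in the relevant case convergent) series $\overline{l(\mathcal{D})}$; if $\overline{l(\mathcal{D})}=\infty$ a separate, easier argument shows $H(\mathcal{D})=\infty$ as well and the identity holds in $[0,\infty]$.

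The main obstacle is the entropy tail term $-\sum_{\beta\in\mathcal{D}_n^{\perp}}P(\beta)\log_2 P(\beta)\to 0$, because the countable alphabet $\mathcal{A}$ is infinite: individual probabilities $P(\beta)$ can be tiny, so $-\log_2 P(\beta)$ is large, and one cannot bound this term crudely by $(\log|\mathcal{A}|)\sum P(\beta)$. I would control it by relating $H(\mathcal{D}_n)$ back through the extension identity — since $H(\mathcal{D}_n)=H(P)\,\overline{l(\mathcal{D}_n)}$ is already known, and $\overline{l(\mathcal{D}_n)}$ has a convergent limit, $H(\mathcal{D}_n)$ converges; combined with the fact that the finished-word part of $H(\mathcal{D}_n)$ converges to $H(\mathcal{D})$, the tail entropy term must converge, and a nonnegativity/monotonicity argument (the finished-word partial sums increase to $H(\mathcal{D})$ while $H(\mathcal{D}_n)\to H(P)\overline{l(\mathcal{D})}$) pins its limit to $0$, giving $H(\mathcal{D})=H(P)\,\overline{l(\mathcal{D})}$. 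Care is needed to justify that $H(P)<\infty$ together with $\overline{l(\mathcal{D})}<\infty$ forces all these partial sums to be finite so the subtractions are legitimate; this finiteness bookkeeping, rather than any single inequality, is where the real work lies.
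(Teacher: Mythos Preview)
Your overall architecture matches the paper's: establish the one-step extension identity, use it to get $H(\mathcal{D}_n)=H(P)\,\overline{l(\mathcal{D}_n)}$ for every $n$, then pass to the limit via Lemma~\ref{lemma9}(3)--(4). However, two genuine gaps remain, and they land precisely where the countably infinite alphabet creates difficulties.

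\textbf{The induction step.} You assert that since $\mathcal{D}_{m+1}$ is obtained from $\mathcal{D}_m$ by extending on $T_m$, the invariant $\Delta=0$ persists. But when $|\mathcal{A}|=\infty$ the set $T_m$ may be infinite, so this is not a finite chain of one-step extensions and the single-step invariance does not transfer automatically. The paper closes this by introducing $\mathcal{D}_{m+1,k}=(\mathcal{D}_m\setminus\{\alpha_i\}_{i=1}^{k})\cup\{\alpha_i\mathcal{A}\}_{i=1}^{k}$ and proving, with explicit sandwich bounds, that $\overline{l(\mathcal{D}_{m+1,k})}\to\overline{l(\mathcal{D}_{m+1})}$ and $H(\mathcal{D}_{m+1,k})\to H(\mathcal{D}_{m+1})$; the entropy side uses $-\sum_{\beta\in\alpha_i\mathcal{A}}P(\beta)\log_2 P(\beta)\ge -P(\alpha_i)\log_2 P(\alpha_i)$.

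\textbf{The entropy tail.} Your argument that $B_n\triangleq-\sum_{\beta\in\mathcal{D}_n^\perp}P(\beta)\log_2 P(\beta)\to 0$ does not close. Writing $H(\mathcal{D}_n)=A_n+B_n$ with $A_n\uparrow H(\mathcal{D})$ and $H(\mathcal{D}_n)\to H(P)\,\overline{l(\mathcal{D})}$, you only obtain $B_n\to H(P)\,\overline{l(\mathcal{D})}-H(\mathcal{D})\ge 0$; nonnegativity and monotonicity of $A_n$ yield the inequality $H(\mathcal{D})\le H(P)\,\overline{l(\mathcal{D})}$, but nothing in what you wrote forces the limit of $B_n$ to be zero. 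What is missing is the reverse sandwich bound $H(\mathcal{D}_n)\le H(\mathcal{D})$. The paper gets it directly: for $\beta\in\mathcal{D}_n^\perp$, every $\alpha\in(\mathcal{D},\beta)$ has $\beta$ as a prefix, hence $P(\alpha)\le P(\beta)$ and $-\log_2 P(\alpha)\ge -\log_2 P(\beta)$, so by Lemma~\ref{lemma9}(3)
\[
-\sum_{\alpha\in(\mathcal{D},\beta)}P(\alpha)\log_2 P(\alpha)\;\ge\;-P(\beta)\log_2 P(\beta),
\]
and summing over $\beta\in\mathcal{D}_n^\perp$ via Lemma~\ref{lemma9}(4) gives $B_n\le -\sum_{\alpha\in\mathcal{D},\,|\alpha|\ge n}P(\alpha)\log_2 P(\alpha)$, the tail of $H(\mathcal{D})$. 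This is exactly the device you already used for the length tail; you simply need to apply it to $-\log_2 P(\cdot)$ as well, rather than trying to infer $B_n\to 0$ indirectly.
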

\begin{proof}
The proof is divided into three parts.
First, we prove that if the dictionary satisfies  \eqref{e11}, then the dictionary after finite extensions also satisfies  \eqref{e11}.
Next, the following equation will be proved:
\begin{equation}\label{e12}
	H(\mathcal{D}_{n})= H(P)\overline{l(\mathcal{D}_{n})},
\end{equation}
for all $n\in\mathbb{N}^{+}$.
Finally, the proof for \eqref{e11} is presented.
\begin{enumerate}
	\item Suppose $\mathcal{S}$ is a  dictionary.
	We  need to prove that when $\mathcal{S}$ satisfies  \eqref{e11}, then $\mathcal{S}[\alpha]$ after one extension also satisfies \eqref{e11}.
	Note that
	\begin{equation*}
		\begin{aligned}
			\overline{l(\mathcal{S}[\alpha])}&= \sum_{\beta\in\mathcal{S}\setminus\{\alpha\}}P(\beta)|\beta|+ \sum_{\beta\in\alpha\mathcal{A}}P(\beta)|\beta|                             \\
			&= \sum_{\beta\in\mathcal{S}}P(\beta)|\beta|-P(\alpha)|\alpha|+ P(\alpha)(|\alpha|+1)     \\
			&= \overline{l(\mathcal{S})}+P(\alpha),
		\end{aligned}
	\end{equation*}
	and
	\begin{equation*}
		\begin{aligned}
			H(\mathcal{S}[\alpha])&= -\sum_{\beta\in\mathcal{S}\setminus\{\alpha\}}P(\beta)\log_{2}P(\beta)-\sum_{\beta\in\alpha\mathcal{A}}P(\beta)\log_{2}P(\beta)                                \\
			&= -\sum_{\beta\in\mathcal{S}}P(\beta)\log_{2}P(\beta)+P(\alpha)\log_{2}P(\alpha)-\sum_{\beta\in\mathcal{A}}P(\alpha)P(\beta)\log_{2}P(\alpha)P(\beta)      \\
			&= H(\mathcal{S})+P(\alpha)\log_{2}P(\alpha)-P(\alpha)\log_{2}P(\alpha)-P(\alpha)\sum_{\beta\in\mathcal{A}}P(\beta)\log_{2}P(\beta)      \\
			&= H(P)\overline{l(\mathcal{S})}+H(P)P(\alpha)    \\
			&= H(P)\overline{l(\mathcal{S}[\alpha])}.
		\end{aligned}
	\end{equation*}
	We have proved that, for one extension, the extended dictionary also satisfies \eqref{e11}. By a similar process, we can prove that, for any finite number of extensions, the extended dictionary satisfies \eqref{e11}. The first part of the proof is complete.
	\item
	We prove \eqref{e12} by mathematical induction.
	When $n=1$, we have $\mathcal{D}_{1}=\mathcal{A}$ and
	\[
	H(P)\overline{l(\mathcal{D}_{1})}=H(P)\times1=H(\mathcal{D}_{1}).
	\]
	Suppose  \eqref{e12} holds when $n=m$.
	Now, we consider the extension process from $\mathcal{D}_{m}$ to $\mathcal{D}_{m+1}$.
	If $|T_m|<\infty$, then $\mathcal{D}_{m+1}$ is obtained by $\mathcal{D}_{m}$ after finite number of extensions.
	We obtain $H(\mathcal{D}_{m+1})= H(P)\overline{l(\mathcal{D}_{m+1})}$ due to the first part of the proof.
	If $|T_m|=\infty$, because $\mathcal{A}$ is countable and the length of the elements in $T_m$ are all $m$, $T_m$ is also countable.
	Therefore, suppose $T_m\triangleq\{\alpha_{i}\}_{i=1}^{\infty}$, the extension process from $\mathcal{D}_{m}$ to $\mathcal{D}_{m+1}$ is as follows:
	\begin{equation*}
		\begin{aligned}
			\mathcal{D}_{m+1,1}\triangleq &(\mathcal{D}_{m}\setminus\{\alpha_{1}\})\cup \alpha_{1}\mathcal{A},                     \\
			\mathcal{D}_{m+1,2}\triangleq & (\mathcal{D}_{m+1,1}\setminus\{\alpha_{2}\})\cup \alpha_{2}\mathcal{A}= (\mathcal{D}_{m}\setminus\{\alpha_{i}\}_{i=1}^{2})\cup \{\alpha_{i}\mathcal{A}\}_{i=1}^{2},     \\
			\vdots &  \\
			\mathcal{D}_{m+1,k} \triangleq &  (\mathcal{D}_{m+1,k-1}\setminus\{\alpha_{k}\})\cup \alpha_{k}\mathcal{A}= (\mathcal{D}_{m}\setminus\{\alpha_{i}\}_{i=1}^{k})\cup \{\alpha_{i}\mathcal{A}\}_{i=1}^{k},     \\
			\vdots &
		\end{aligned}
	\end{equation*}
	Then, we have the following three equations.
	\begin{enumerate}
		\item[(\romannumeral1)] $H(\mathcal{D}_{m+1,k})= H(P)\overline{l(\mathcal{D}_{m+1,k})}$, for all $k\in\mathbb{N}^{+}$.
		\item[(\romannumeral2)] $\lim\limits_{k\to +\infty}\mathcal{D}_{m+1,k}=\mathcal{D}_{m+1}$.
		\item[(\romannumeral3)] $\mathcal{D}_{m+1}=(\mathcal{D}_{m}\setminus\{\alpha_{i}\}_{i=1}^{\infty})\cup \{\alpha_{i}\mathcal{A}\}_{i=1}^{\infty}$.
	\end{enumerate}
	Next, we prove the following two equations:
	\begin{equation}\label{e13}
		\begin{aligned}
			\lim\limits_{k\to +\infty}\overline{l(\mathcal{D}_{m+1,k})}&=\overline{l(\mathcal{D}_{m+1})},            			\\
			\lim\limits_{k\to +\infty} H(\mathcal{D}_{m+1,k})&=H(\mathcal{D}_{m+1}).
		\end{aligned}
	\end{equation}
	First, we have
	\begin{equation*}
		\begin{aligned}
			\overline{l(\mathcal{D}_{m+1})}& = \sum_{\alpha\in\mathcal{D}_{m+1}}P(\alpha)|\alpha|                               \\
			&\geq \sum_{\alpha\in\mathcal{D}_{m+1,k}}P(\alpha)|\alpha|       \\
			& = \overline{l(\mathcal{D}_{m+1,k})}      \\
			&\geq \sum_{\alpha\in(\mathcal{D}_{m}\setminus\{\alpha_{i}\}_{i=1}^{\infty})\cup\{\alpha_{i}\mathcal{A}\}_{i=1}^{k}}P(\alpha)|\alpha|.
		\end{aligned}
	\end{equation*}
	Taking the limit $k\rightarrow\infty$, we obtain
	\begin{equation*}
		\begin{aligned}
			\overline{l(\mathcal{D}_{m+1})}&\geq \lim\limits_{k\to +\infty} \overline{l(\mathcal{D}_{m+1,k})}     \\
			&\geq \lim\limits_{k\to +\infty} \sum_{\alpha\in(\mathcal{D}_{m}\setminus\{\alpha_{i}\}_{i=1}^{\infty})\cup\{\alpha_{i}\mathcal{A}\}_{i=1}^{k}}P(\alpha)|\alpha|  \\
			&=\overline{l(\mathcal{D}_{m+1})}.
		\end{aligned}
	\end{equation*}
	Then, we have
	\begin{equation*}
		\begin{aligned}
			H(\mathcal{D}_{m+1})&= -\sum_{\alpha\in(\mathcal{D}_{m}\setminus\{\alpha_{i}\}_{i=1}^{\infty})\cup \{\alpha_{i}\mathcal{A}\}_{i=1}^{k}}P(\alpha)\log_{2}P(\alpha)-\sum_{\alpha\in\{\alpha_{i}\mathcal{A}\}_{i=k+1}^{\infty}}P(\alpha)\log_{2}P(\alpha)       \\
			&\overset{(a)}{\geq}-\sum_{\alpha\in(\mathcal{D}_{m}\setminus\{\alpha_{i}\}_{i=1}^{\infty})\cup \{\alpha_{i}\mathcal{A}\}_{i=1}^{k}}P(\alpha)\log_{2}P(\alpha)-\sum_{\alpha\in\{\alpha_{i}\}_{i=k+1}^{\infty}}P(\alpha)\log_{2}P(\alpha)       \\
			&=-\sum_{\alpha\in(\mathcal{D}_{m}\setminus\{\alpha_{i}\}_{i=1}^{k})\cup \{\alpha_{i}\mathcal{A}\}_{i=1}^{k}}P(\alpha)\log_{2}P(\alpha)       \\
			&= H(\mathcal{D}_{m+1,k})    \\
			&\geq -\sum_{\alpha\in(\mathcal{D}_{m}\setminus\{\alpha_{i}\}_{i=1}^{\infty})\cup \{\alpha_{i}\mathcal{A}\}_{i=1}^{k}}P(\alpha)\log_{2}P(\alpha),
		\end{aligned}
	\end{equation*}
	where $(a)$ is due to $-\sum_{\alpha\in\alpha_{i}\mathcal{A}}P(\alpha)\log_{2}P(\alpha)\geq -P(\alpha_{i})\log_{2}P(\alpha_{i})$, for all $i\in\mathbb{N}^{+}$.
	Taking the limit $k\rightarrow\infty$, we obtain
	\begin{equation*}
		\begin{aligned}
			H(\mathcal{D}_{m+1})&\geq \lim\limits_{k\to +\infty} H(\mathcal{D}_{m+1,k})    \\
			&\geq \lim\limits_{k\to +\infty} -\sum_{\alpha\in(\mathcal{D}_{m}\setminus\{\alpha_{i}\}_{i=1}^{\infty})\cup \{\alpha_{i}\mathcal{A}\}_{i=1}^{k}}P(\alpha)\log_{2}P(\alpha) \\
			&=H(\mathcal{D}_{m+1}).
		\end{aligned}
	\end{equation*}
	Equation \eqref{e13} is proved. From the perspective of mathematical analysis, equation \eqref{e13} essentially considers whether the function and the limit can be exchanged.
	For example, $\lim\limits_{k\to +\infty} H(\mathcal{D}_{m+1,k})=H(\lim\limits_{k\to +\infty}\mathcal{D}_{m+1,k})$.
	Finally, form equation \eqref{e13}, we have
	\begin{equation*}
		\begin{aligned}
			H(\mathcal{D}_{m+1})&= \lim\limits_{k\to +\infty} H(\mathcal{D}_{m+1,k})       \\
			&= \lim\limits_{k\to +\infty} H(P)\overline{l(\mathcal{D}_{m+1,k})}   \\
			&= H(P)\overline{l(\mathcal{D}_{m+1})} .
		\end{aligned}
	\end{equation*}
	The second part of the proof is complete.
	 \item
 We prove the following two equations similar to equation \eqref{e13}.
 \begin{equation}\label{e14}
  \begin{aligned}
   \lim\limits_{m\to +\infty}\overline{l(\mathcal{D}_{m})}&=\overline{l(\mathcal{D})}.            \\
   \lim\limits_{m\to +\infty} H(\mathcal{D}_{m})&=H(\mathcal{D}).
  \end{aligned}
 \end{equation}
First, due to Lemma~\ref{lemma9}$(3)$, we obtain
 \[
 \sum_{\alpha\in(\mathcal{D},\beta)}P(\alpha)|\alpha|\geq
 |\beta|\sum_{\alpha\in(\mathcal{D},\beta)}P(\alpha)=P(\beta)|\beta|,
 \]
 for any given $\beta\in\mathcal{D}_{m}^{\perp}$.
Furthermore, from Lemma~\ref{lemma9}$(4)$, we have
 \[
\sum_{\alpha\in\mathcal{D},|\alpha|\geq m}P(\alpha)|\alpha|= \sum_{\beta\in\mathcal{D}_{m}^{\perp}} \sum_{\alpha\in(\mathcal{D},\beta)}P(\alpha)|\alpha|    \geq   \sum_{\beta\in\mathcal{D}_{m}^{\perp}}P(\beta)|\beta| .
 \]
Therefore, we obtain
 \begin{equation*}
  \begin{aligned}
   \overline{l(\mathcal{D})}& = \sum_{\alpha\in\mathcal{D},|\alpha|<m}P(\alpha)|\alpha|+ \sum_{\alpha\in\mathcal{D},|\alpha|\geq m}P(\alpha)|\alpha|                             \\
   &\geq\sum_{\alpha\in\mathcal{D},|\alpha|<m}P(\alpha)|\alpha|+ \sum_{\beta\in\mathcal{D}_{m}^{\perp}}P(\beta)|\beta|      \\
   & = \overline{l(\mathcal{D}_{m})}      \\
   &\geq \sum_{\alpha\in\mathcal{D},|\alpha|<m}P(\alpha)|\alpha|.
  \end{aligned}
 \end{equation*}
 Taking the limit $m\rightarrow\infty$, we obtain
 \begin{equation*}
  \begin{aligned}
   \overline{l(\mathcal{D})}&\geq \lim\limits_{m\to +\infty} \overline{l(\mathcal{D}_{m})}     \\
   &\geq \lim\limits_{m\to +\infty}\sum_{\alpha\in\mathcal{D},|\alpha|<m}P(\alpha)|\alpha|  \\
   &=\overline{l(\mathcal{D})}.
  \end{aligned}
 \end{equation*}
 Then, due to Lemma~\ref{lemma9}$(3)$, we obtain
 \[
 -\sum_{\alpha\in(\mathcal{D},\beta)}P(\alpha)\log_{2}P(\alpha)\geq -\log_{2}P(\beta)
 \sum_{\alpha\in(\mathcal{D},\beta)}P(\alpha)=-P(\beta)\log_{2}P(\beta),
 \]
 for any given $\beta\in\mathcal{D}_{m}^{\perp}$.
 Furthermore, from Lemma~\ref{lemma9}$(4)$, we have
 \[
 -\sum_{\alpha\in\mathcal{D},|\alpha|\geq m}P(\alpha)\log_{2}P(\alpha)=
 -\sum_{\beta\in\mathcal{D}_{m}^{\perp}} \sum_{\alpha\in(\mathcal{D},\beta)}P(\alpha)\log_{2}P(\alpha) \geq   -\sum_{\beta\in\mathcal{D}_{m}^{\perp}}P(\beta)\log_{2}P(\beta)  .
 \]
 Therefore, we obtain
 \begin{equation*}
  \begin{aligned}
   H(\mathcal{D})&= -\sum_{\alpha\in\mathcal{D},|\alpha|<m}P(\alpha)\log_{2}P(\alpha)-\sum_{\alpha\in\mathcal{D},|\alpha|\geq m}P(\alpha)\log_{2}P(\alpha)       \\
 &\geq-\sum_{\alpha\in\mathcal{D},|\alpha|<m}P(\alpha)\log_{2}P(\alpha)-\sum_{\beta\in\mathcal{D}_{m}^{\perp}}P(\beta)\log_{2}P(\beta)        \\
   &= H(\mathcal{D}_{m})    \\
   &\geq -\sum_{\alpha\in\mathcal{D},|\alpha|<m}P(\alpha)\log_{2}P(\alpha),
  \end{aligned}
 \end{equation*}
 Taking the limit $m\rightarrow\infty$, we obtain
 \begin{equation*}
  \begin{aligned}
   H(\mathcal{D})&\geq \lim\limits_{m\to +\infty} H(\mathcal{D}_{m})    \\
   &\geq \lim\limits_{m\to +\infty} -\sum_{\alpha\in\mathcal{D},|\alpha|<m}P(\alpha)\log_{2}P(\alpha) \\
   &=H(\mathcal{D}).
  \end{aligned}
 \end{equation*}
 Equation \eqref{e14} is proved.
 Finally, form equation \eqref{e14}, we have
 \begin{equation*}
  \begin{aligned}
   H(\mathcal{D})&= \lim\limits_{m\to +\infty} H(\mathcal{D}_{m})       \\
   &= \lim\limits_{m\to +\infty} H(P)\overline{l(\mathcal{D}_{m})}   \\
   &= H(P)\overline{l(\mathcal{D})} .
  \end{aligned}
 \end{equation*}
 \end{enumerate}
 The proof is completed.
\end{proof}
\bibliographystyle{IEEEtran}
\bibliography{IEEEabrv,refs}
\end{document}